\documentclass[a4paper,12pt]{article}

\usepackage{graphicx}
\usepackage{amsmath,amssymb,amsthm}
\usepackage{multicol}

\newtheorem{theorem}{Theorem}
\newtheorem{conjecture}[theorem]{Conjecture}
\newtheorem{proposition}[theorem]{Proposition}
\newtheorem{lemma}[theorem]{Lemma}

\newtheorem{example}[theorem]{Example}
\numberwithin{equation}{section}

\newcommand{\prob}{{\mathbb P}}
\newcommand{\expec}{{\mathbb E}}

\newcommand{\N}{\mathbb N}
\newcommand{\Z}{\mathbb Z}

\title{Critical densities in sandpile models with quenched or annealed disorder}

\author{Anne Fey \and Ronald Meester}

\begin{document}

\maketitle

\begin{abstract}
We discuss various critical densities in sandpile models. The {\em stationary density} is the average expected height in the stationary state of a
finite-volume model; the {\em transition density} is the critical point in the infinite-volume counterpart. These two critical densities were generally assumed to be equal, but this has turned out to be wrong for deterministic sandpile models. We show they are not equal in a quenched version of the Manna sandpile model either.

In the literature, when the transition density is simulated, it is implicitly or explicitly assumed to be equal to either the so-called {\em threshold density} or the so-called {\em critical activity density}. We properly define these auxiliary densities, and prove that in certain cases, the threshold density is equal to the transition density. We extend the definition of the critical activity density to infinite volume, and prove that in the standard infinite volume sandpile, it is equal to 1. Our results should bring some order in the precise relations between the various densities.
\end{abstract}

\medskip\noindent
{\sc Keywords:} Sandpile model, phase transition, critical density, transition density, Manna model, activity density.

\medskip\noindent
{\sc AMS Subject Classification:} 60K35, 60K37, 60J05, 60G10.

\section{Introduction}

This paper aims to bring some order in a long-standing discussion on the equality or inequality of various critical densities in sandpile models.
By means of introduction and context, we first informally describe the archetypal Abelian sandpile model (ASM or `classical' sandpile model). On $I_n = \{0,1, \ldots, n\}$ we initially assign a height (a non-negative integer) to each site $x \in I_n$. The dynamics is as follows. Each time step starts with an addition: the height of a randomly chosen site is increased by 1. Then the configuration is stabilized by so called {\em topplings}: a site with height 2 or more is unstable and may topple. In a toppling, its height decreases by 2 and the height of both neighbors increases by 1. Sites $0$ or $n$ have only one neighbor; when these sites topple, one particle leaves the system. The configuration is stabilized when there are no more unstable sites to topple. It is not hard to show that in the stationary state of this model, the expected height is $1-\frac 1{n+2}$ for every site; see for example \cite{quant,redig}. We call this expected height, in the limit as $n \to \infty$, the {\em stationary density} and denote it by $\rho_s$. (So in this case, $\rho_s=1$.)

Next consider a related model, the infinite volume model. On the infinite line $\Z$, start with (independent) Poisson-$\rho$
number of particles at each site. At each discrete time step, we simply topple all unstable sites simultaneously. (There are alternatives, like toppling after exponential waiting times, independently for each site, see \cite{FMR} for a detailed discussion of this.)
It is known \cite{FMR} that for $\rho<1$, only finitely many topplings per site are needed to obtain at a limiting stable configuration, whereas for $\rho>1$, every site topples infinitely many times. Hence, this model has a phase transition at $\rho=1$ and we call this the {\em transition density} $\rho_{tr}$.

In a widely cited series of papers \cite{absorbing1,absorbing2,absorbing3,absorbing4,absorbing5}, Dickman, Mu{\~n}oz, Vespignani and Zapperi developed a theory of self-organized criticality (SOC) as a relationship between driven dissipative systems like the ASM, and systems with particle conservation like the infinite volume model.
Loosely speaking, the following heuristic connection (the `density argument') is offered: in the presence of activity of the ASM, that is, when topplings take place, we should typically have $\rho > \rho_{tr}$, and particles are leaving the system. In the absence of activity, we typically have $\rho < \rho_{tr}$, and there is addition of particles. Hence the density of particles always changes in the direction of $\rho_{tr}$ and therefore the only candidate-value for the stationary density is $\rho_{tr}$.

This would be a very attractive explanation for the observation that the stationary state of the two-dimensional ASM shows power law behaviour of many spatial and temporal quantities, (apparently) without any tuning of a parameter. However, it requires that $\rho_s=\rho_{tr}$. In the one-dimensional case this is true. At first, simulations and exact calculations for the two-dimensional ASM seemed to confirm that $\rho_s$ and $\rho_{tr}$ are equal up to three decimals. However, more accurate simulations show a difference in the fourth decimal for the two densities on $\Z^2$. Moreover, on several other graphs they are provably different \cite{FLW}. It is at present believed that this difference is a consequence of the existence of so called {\em toppling invariants} \cite{casartelli}.

The more recent literature (we give references in Section \ref{mannaconjecture}) on the density argument tends to focus on the Manna model, a sandpile model with stochastic topplings, for which it is believed that toppling invariants play no role. The Manna model is defined as above, with the important difference that during a toppling, particles go to randomly chosen neighbors, independently of each other, and independent of earlier topplings of the same or other sites. In other words, the topplings in the Manna model are subject to {\em annealed} disorder.

Simulating $\rho_s$ for a sandpile model is straightforward, since this density is defined for a model in finite volume. However, $\rho_{tr}$ is defined in infinite volume, which makes simulations much more difficult. In reality, when people try to simulate $\rho_{tr}$, what they really simulate is the so called {\em critical activity density} or the {\em threshold density}, which we define and study below. It is then tacitly assumed (sometimes also explicitly) that these densities are equal to $\rho_{tr}$.

The goal of the present paper is to carefully distinguish between the various densities, and to investigate the density conjecture in {\em quenched} versions of the Manna model, that is, in sandpile models in which the toppling rules per site are random but fixed throughout the evolution of the process. It turns out that we can explicitly compute $\rho_s$ and $\rho_{tr}$ in these quenched Manna models, and that in general they are {\em not} equal. We also study the relation between the various densities in the original (annealed) Manna model.

We end this introduction with a precise definition of the sandpile models which we consider.
We distinguish between

\begin{enumerate}
\item Driven models, that is, models with addition of particles and topplings, always on the interval $I_n = \{0, 1, \ldots, n\}$ with dissipation at the open boundaries;
\item Infinite volume models on $\Z$, without additions, and only topplings;
\item Fixed-energy models, that is, models with topplings but without additions, on the $n$-cycle $C_n = \{0, 2, \ldots, n\}$, where $0$ and $n$ are neighbors.
\end{enumerate}

We abuse notation and view $I_n, C_n$ and $\Z$ as graphs with nearest neighbor connections. A typical configuration will be denoted $\eta$ and for any vertex (site) $x$, $\eta(x) \in \{0,1,2,\ldots \}$ is interpreted as the
height at site $x$. We call a site {\em stable} if its height is 0 or 1. If the height of a site is 2 or more, then we call this site {\em unstable}. If all sites are stable, then we say that the configuration is stable.

Upon an {\em addition} at site $x$, the height of site $x$ increases by 1, and can henceforth become unstable. An unstable site may {\em topple}. With a toppling of site $x$, the height of site $x$ decreases by 2, and the height of the neighboring sites may increase, in such a way that the total increase is at most 2. We denote the neighbors of site $x$ as $x-1$ and $x+1$, where on the graph $C_n$, we interpret this modulo $n+1$. On $I_n$, 0 and $n$ only have 1 neighbor.
We consider three possibilities for the height increase of the neighbors in a toppling, namely

\begin{enumerate}
\item The height of site $x-1$ (if present) increases by 2, and the height of site $x+1$ does not change. We call
this a {\em left} toppling ($l$).
\item The height of site $x+1$ (if present) increases by 2, and the height of site $x-1$ does not change. We call
this a {\em right} toppling ($r$).
\item The height of both sites $x-1$ and $x+1$ (if present) increase by 1. We call this a {\em symmetric} toppling ($s$).
\end{enumerate}

In this paper, we introduce {\em quenched sandpile models}, which we define as follows. On a graph $G$, we let $\nu$ be a probability measure on $\{l,r,s\}^G$, an element of which we call a {\em toppling environment}. We choose a toppling environment $\zeta$ according to $\nu$. Given a toppling environment $\zeta$, we allow at site $x$ only topplings of type $\zeta(x)$. We say that $\zeta(x)$ is the {\em label} at site $x$. The combination of a configuration $\eta$ and a toppling environment $\zeta$ is called a {\em state}, and denoted $(\eta,\zeta)$. We call this state stable if $\eta$ is a stable configuration.

If a state is unstable then topplings can lead to new unstable sites. We say we {\em stabilize} a state $(\eta,\zeta)$ if we keep toppling unstable sites, according to the (fixed) toppling environment, until there are no unstable sites anymore. It may happen though that we never obtain a stable configuration this way; in this case we say that the state is not stabilizable. If a stable configuration can be reached, then the order in which we perform topplings has no effect on the final stable configuration; this is the famous Abelian property of the sandpile (see \cite{FMR} for some technical details on this fact in the case of infinite volume). When $l$- and $r$-labels are not allowed, then this model reduces to the classical sandpile model.

The {\em Manna model} is the annealed version of the above quenched sandpile model. One convenient way to describe the Manna model \cite{rolla}, is that instead of a single label, we have a stack of labels at every site. Every stack consists of an infinite sequence of i.i.d.\ random labels, which are $s$ with probability 1/2, $l$ with probability 1/4 and $r$ with probability 1/4, independent of all other stacks. When a site topples, it topples according to the first label in its stack, and then removes that label from the stack.

This paper is organized as follows: In Section \ref{maindensities}, we define and state our results on the two main critical densities, namely the stationary and the transition density. In Section \ref{howtosimulate}, we discuss the additional critical densities that have been introduced in the literature to simulate $\rho_{tr}$. We state a number of new results on these densities. Finally, we give the proofs of all our results in a separate Section \ref{proofs}.

\section{$\rho_s$ and $\rho_{tr}$ for the quenched sandpile models}
\label{maindensities}

\subsection{Driven sandpiles on $I_n$ and the stationary density}

Driven sandpile models are defined on $I_n$. Let $\mu$ be a probability measure on $\{0,1\}^{\N}$ and $\nu$ a translation-invariant probability measure on $\{l,r,s\}^{\N}$. We use $\mu$ to draw an initial configuration, and $\nu$ to draw a toppling environment. More precisely, to implement the process on $I_n$, the initial configuration $\eta_n^0$ is the restriction of a drawing from $\mu$ to the first $n+1$ coordinates, and likewise, the toppling environment $\zeta_n$ is the restriction to the first $n+1$ coordinates of a drawing from $\nu$.

Dynamics is as follows. Given $\eta^0$, we add a particle at a randomly chosen site (uniformly over all sites) and stabilize the resulting configuration. The result is denoted $\eta_n^1$. Repeating this process leads to to a sequence of configurations $(\eta_n^t)$, $t=0,1,\ldots$.

We are interested in the `average' number of occupied sites in the process on $I_n$ in the limit as $n \to \infty$. For given $n$ and $\zeta$, the process is a Markov chain, but it turns out that for certain toppling environments, this Markov chain is
periodic so that we do not have a well defined unique stationary distribution. Nevertheless, the `average' hinted at above often exists in an unambiguous way. We denote by $\rho_n^t=\rho_n^t(\mu,\nu)$ the expected fraction of sites in $I_n$ whose height is 1 at time $t$. Here, expectation is with respect to $\mu \times \nu$. We say that $\nu$ is {\em directed} if either $\nu(\zeta(0)=r)=0$ or $\nu(\zeta(0)=l)=0$.

\begin{theorem}
\label{frozdirstat}
Suppose that $\nu$ is directed. Let $f_r = \nu(\zeta(0)\neq s)$. Then
\begin{equation}
\label{bestaanlimiet}
\rho_s= \rho_s(\mu, \nu) :=\lim_{n \to \infty} \lim_{t \to \infty} \rho_n^t
\end{equation}
exists and is equal to
$$
1-\frac{f_r}{2}.
$$
In particular, $\rho_s$ does not depend on $\mu$. Note that in the classical case, $f_r=0$, and Theorem \ref{frozdirstat} reduces to the well known result which we described in the introduction.

\end{theorem}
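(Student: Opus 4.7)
The plan is to reduce the problem to analysing, for each \emph{fixed} directed toppling environment $\zeta\in\{s,r\}^{\N}$, the time-averaged expected density in the driven chain on $I_n$, and then to integrate over $\nu$. For fixed $\zeta$, the driven process is a finite-state Markov chain on the stable configurations in $\{0,1\}^{n+1}$, and using the Abelian property one verifies that the chain has a unique recurrent class; hence the Ces\`aro limit $\bar\rho_n(\zeta):=\lim_{t\to\infty}(n+1)^{-1}\sum_{k=0}^n\prob_{\mu,\zeta}(\eta_n^t(k)=1)$ exists, is independent of $\mu$, and equals the mean density under the unique stationary measure $\pi_{n,\zeta}$ on the recurrent class.

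The heart of the argument is to establish that
\[
(n+1)\,\bar\rho_n(\zeta) \;=\; (n+1)-\tfrac12\,\#\{k\in I_n:\zeta(k)=r\}+R_n(\zeta),
\]
with a remainder $R_n(\zeta)$ that is bounded uniformly in $n$ and $\zeta$. I would derive this from two ingredients. First, mass balance in the stationary state yields a linear system for the expected number $T_k$ of topplings at site $k$ per addition,
\[
2T_k=\tfrac{1}{n+1}+T_{k-1}\bigl(1+\ind_{\zeta(k-1)=r}\bigr)+T_{k+1}\,\ind_{\zeta(k+1)=s},
\]
with the appropriate boundary conventions at $k=0$ and $k=n$, together with the global constraint that total dissipation equals the addition rate $1$. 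Second -- and this is the key structural input -- I would exploit directedness: an $r$-toppling sends both particles to the right, whereas an $s$-toppling at a site that has just absorbed two incoming particles topples exactly once and returns one particle straight back to its left neighbour. Consequently the cascades initiated by each addition are tightly interlocked, and by tracking the parity of $\eta(k)$ through these interlocked cascades and summing the local contributions over $k$ one obtains the displayed identity.

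Finally I would take expectation over $\nu$. Translation-invariance of $\nu$ gives $\expec_\nu\#\{k\in I_n:\zeta(k)=r\}=(n+1)f_r$, so $\expec_\nu\bigl[(n+1)\bar\rho_n(\zeta)\bigr]=(n+1)(1-f_r/2)+O(1)$; dividing by $n+1$ and sending $n\to\infty$ yields $\rho_s=1-f_r/2$. Independence from $\mu$ is already built in at the first step, and the classical-case remark is just the specialisation $f_r=0$.

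The main obstacle I foresee is the uniform bound on $R_n(\zeta)$. Unlike the classical ASM, where Dhar's theorem identifies $\pi_{n,\zeta}$ as uniform on the recurrent configurations, in the quenched directed model the addition-and-stabilise map is in general \emph{not} doubly stochastic on its recurrent class -- a direct computation on $I_2$ with $\zeta=(r,s,r)$ already shows that an $r$-site flanked by $s$-sites on both sides produces asymmetrically-weighted cascades. Controlling $R_n(\zeta)$ therefore cannot proceed by any naive state-counting; it requires a careful bulk accounting of the interlocked cascades, combined with the fact that particles leave the system only at the two endpoints so that any boundary contribution to the density stays $O(1)$.
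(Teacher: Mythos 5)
There is a genuine gap at the heart of your argument. You correctly guess the form of the answer (the expected number of empty sites should be half the number of $r$-sites, up to an $O(1)$ remainder), and you correctly observe that the stationary measure is \emph{not} uniform on the recurrent class, so Dhar-style state counting is unavailable. But the two ingredients you offer do not produce the key identity. The mass-balance system for the expected toppling numbers $T_k$ is a statement about toppling \emph{rates}: in the stationary state the heights cancel out of that balance entirely, so solving it tells you how often each site topples but gives no information about $\prob(\eta(k)=1)$. The second ingredient ("tracking the parity of $\eta(k)$ through interlocked cascades and summing local contributions") is exactly the step that needs an actual mechanism, and you defer it to the remainder bound $R_n(\zeta)$, which you then flag as the main unresolved obstacle. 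So the proof of the factor $1/2$ -- the entire content of the theorem beyond the trivial upper bound -- is not carried out.

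The paper's mechanism, which is what you are missing, is a block decomposition rather than a rate computation. The $r$-sites cut $I_n$ into blocks $B_i=[X_i,X_{i+1}-1]$; within each block the leftmost site is $r$ and the rest are $s$. A separate lemma shows that each such block has exactly \emph{two} recurrent sub-configurations: all sites full, or all full except the rightmost. Moreover, an addition at the $a$-th site of a block either leaves the block unchanged or flips it between these two states, and which of the two happens is a \emph{deterministic} function of the addition site (via the parity of its position in the block), not of the current configuration; one then checks that each addition flips exactly one block (or none), the choice of which is again determined by the addition site alone. Hence each block's state is a two-state chain driven by i.i.d.\ coin flips with a fixed success probability in $(0,1)$, so the probability of being full converges to $1/2$ -- and this is aperiodic, which is also why the theorem can assert a plain limit in $t$ rather than the Ces\`aro limit your setup provides. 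Each block then contributes $1/2$ expected empty sites, the leftmost segment $B_0$ (a classical sandpile) contributes $O(1)$, and summing over the $\approx f_r(n+1)$ blocks gives $\rho_s=1-f_r/2$. Without something playing the role of this two-state block structure and the configuration-independence of the flip rule, your accounting of "interlocked cascades" has no way to pin the empty-site density to exactly half the $r$-site density.
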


The quantity $\rho_s$ is called the {\em stationary density}.

\subsection{Infinite volume sandpiles on $\Z$ and the transition density}

Infinite volume sandpiles in this article are defined on $\Z$. We choose the initial state $\eta$ according to $\mu_\rho$, a Poisson product measure with parameter $\rho$. The toppling environment $\zeta$ is chosen as before, according to a stationary and ergodic probability measure $\nu$ on
$\{l,s,r\}^{\Z}$. There are no additions in this model, only topplings. At any discrete time $t$, we simply topple all unstable sites once. This implies that for every $t$, the distribution of $\eta^t$ is translation-invariant.

We define the {\em transition density} as
\[
\rho_{tr} = \rho_{tr}(\nu)= \sup \{\rho: (\eta,\zeta) \mbox{ is stabilizable } (\mu_\rho \times \nu)-\mbox{ a.s.}\}.
\label{rhotr}
\]

As mentioned in the introduction, it was recently proved for the classical sandpile model on the bracelet graph \cite{FLW} (with proper interpretations of $\rho_s$ and $\rho_{tr}$) that $\rho_s \neq \rho_{tr}$. The current belief is that this inequality holds in general for classical sandpile models, and that the one-dimensional case is an exception.

In this paper, we prove that $\rho_s \neq \rho_{tr}$ for quenched sandpiles, as long as $\nu$ is directed and satisfies
\begin{equation}
\label{restriction}
\nu (\zeta: \zeta(x)=\zeta(x+1)=s) = 0 \mbox{ \rm and } \nu(\zeta(x)=\zeta(x+2)=s) = 0.
\end{equation}
for all $x$. In words, this condition means that between every pair of $s$ labels there must be at least two $r$ labels. Note that the transition density is not interesting for non-directed models: indeed, between $r$- and an $l$-labelled sites, particles cannot escape, and since it is possible with positive probability that too many particles are trapped in between, the model will not stabilize a.s., for any value of $\rho >0$.

\begin{theorem}
Suppose that $\nu$ is directed and satisfies (\ref{restriction}).
Let $f_r = \nu(\zeta(0)=r)$. Then we have
\begin{itemize}
\item[(a)] $\rho_s = 1-\frac{f_r}{2}$.
\item[(b)] $\rho_{tr}$ is the unique solution of $\rho-\frac 12(1 + e^{-2\rho}) =\frac {1-f_r}2 (1 + e^{-2\rho})^2$.
\end{itemize}
\label{frozrestricted}
In particular, $\rho_s \neq \rho_{tr}$ for $\nu$ satisfying (\ref{restriction}).
\end{theorem}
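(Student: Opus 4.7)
Part (a) is immediate from Theorem~\ref{frozdirstat}: the hypothesis that $\nu$ is directed is shared, and the additional condition (\ref{restriction}) plays no role in the formula for $\rho_s$, which therefore equals $1-f_r/2$.

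For part (b), I would begin by observing that under (\ref{restriction}) each site falls into exactly one of four translation-invariant types, with probabilities depending only on $f_r$: a pure $s$-site (probability $1-f_r$); an $r$-site with both $r$-neighbors (probability $3f_r-2$); and two ``interface'' $r$-sites with an $s$-neighbor on exactly one side (each of probability $1-f_r$). In particular this forces $f_r\ge 2/3$. Writing the conservation equation $\eta^\infty(x) = \eta^0(x) + \text{inflow} - 2u(x)$ at each type and reducing modulo $2$, one finds $\eta^\infty(x)\equiv\eta^0(x)\pmod 2$ at $s$-sites and at $r$-sites with two $r$-neighbors, while at an interface $r$-site the parity of $\eta^\infty(x)$ is coupled to that of $u$ at the adjacent $s$-site.

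The heart of the argument is to construct, for each $\rho$, a candidate odometer by a recursion between consecutive $s$-sites. Within an $r$-block (of length at least $2$ by (\ref{restriction})) the recursion reads $u(y) - u(y-1) = \lfloor \eta^0(y)/2 \rfloor$; at each $s$-site the toppling rule sends one particle in each direction, producing a ``reflection'' of the rightward flow. I would turn this into a one-dimensional recursion for $u$ at consecutive $s$-sites and analyze its asymptotic behavior. The critical density $\rho_{tr}$ is the unique value separating the regime in which this recursion is almost surely summable from the one in which it diverges; by direct computation, using that $p := (1+e^{-2\rho})/2 = \prob(\text{Poisson}(\rho)\text{ is even})$ and that the two-sided nature of $s$-topplings contributes a factor quadratic in $p$ (one $p$ for each of the two interface neighbors of a given $s$-site), one obtains the stated identity $\rho - p = 2(1-f_r)p^2$. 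Uniqueness of the root follows from monotonicity: $\rho \mapsto \rho - p(\rho) - 2(1-f_r)p(\rho)^2$ is strictly increasing on $(0,\infty)$, negative at $0$, and positive for large $\rho$.

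The main obstacle is the self-referential structure of the odometer at interface $r$-sites: the local equation determining $u$ at such a site involves $u$ at the adjacent $s$-site, which in turn depends on $u$ at the same interface site, admitting two candidate local solutions differing by $1$. Resolving this ambiguity globally requires the Abelian property (the odometer is the minimal non-negative function making the configuration stable) together with a careful bookkeeping of parities across each block. Condition (\ref{restriction}) is crucial at exactly this step: it guarantees that each $s$-site is surrounded by $r$-sites up to distance $2$, so that the local ``reflection structure'' around each $s$-site has a canonical form independent of further details of $\nu$, which is why the final formula depends on $\nu$ only through $f_r$.
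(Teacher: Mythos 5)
Your part (a) matches the paper, and your final equation is the correct one (with $p=\tfrac12(1+e^{-2\rho})$ the paper's identity reads $\rho-p=2(1-f_r)p^2$, equivalently density of excess pairs $=\tfrac12(\rho-p)$ equals density of ``holes'' $=(1-f_r)p^2$). Your classification of sites into four types and the consequence $f_r\ge 2/3$ is also correct, though the paper does not need it. But the core of part (b) is missing. The paper's proof does not run through an odometer recursion at all: it tracks the \emph{excess pairs} $E(x)=(\eta(x)-\eta'(x))/2$, proves via an explicit parity table that an $s$-site (necessarily flanked by $r$-sites, by (\ref{restriction})) can absorb exactly one excess pair if and only if both of its neighbours have even initial parity (a ``hole''), and then establishes \emph{both} implications between stabilizability and the comparison of the two densities: non-stabilizability when excess pairs outnumber holes is proved with the mass-transport principle (a perfect matching of pairs to holes would force the hole density to be at least the pair density), and stabilizability in the opposite regime is proved by introducing the random walk $S(x)=S(x-1)+E(x)-h(x)$, which has negative drift, and showing that no excess pair crosses an extremal point of $S$.

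In your write-up these two implications are exactly what is not supplied. You assert that a recursion for $u$ between consecutive $s$-sites has a summability/divergence transition at the stated value, but you neither construct the recursion nor analyze it, and the ``factor quadratic in $p$'' is motivated only by a counting heuristic (one $p$ per interface neighbour) rather than by a proof that the absorption condition is precisely that both neighbours of the $s$-site have even parity. You also correctly identify the self-referential structure at interface sites as ``the main obstacle,'' but you do not resolve it; the paper sidesteps it entirely by working with parities and excess pairs instead of the odometer. As it stands the proposal predicts the right threshold but does not prove that it is the threshold: in particular the direction ``hole density exceeds excess-pair density $\Rightarrow$ a.s.\ stabilizable'' genuinely requires an argument of the extremal-point type (or some substitute), and the converse direction requires mass transport or an equivalent ergodic-theoretic matching argument. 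Neither can be waved through by monotonicity of $\rho\mapsto\rho-p-2(1-f_r)p^2$, which only gives uniqueness of the root once the characterization is established.
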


\section{Simulating $\rho_{tr}$: more critical densities}
\label{howtosimulate}

\subsection{Fixed-energy sandpile models and the threshold density}

The fixed-energy sandpile models in this article are defined on $C_n$. They play an important role in simulating $\rho_{tr}$. In the initial state, $\eta_n$ is chosen according to $\mu_{\rho}$, a Poisson product measure with parameter $\rho$, and $\zeta_n$ is chosen according to $\nu$.
The fixed-energy model evolves in time by topplings. If the initial state is stabilizable, then every site topples only finitely many times.

We define the {\em threshold density} as
\[
\rho_{th}=\rho_{th}(\nu) = \lim_{n \to \infty} \sup \{\rho: (\mu_{\rho} \times \nu )((\eta_n,\zeta_n) \mbox{ is stabilizable})\geq 1/2\}.
\label{rhoth}
\]
For the classical fixed-energy sandpile model, the threshold density is equal to 1.

Note that the definition of $\rho_{th}$ is very similar to that of $\rho_{tr}$. It is often tacitly assumed that they are equal for the classical sandpile on general graphs: in \cite{FLW}, this assumption is explicitly stated. In the literature, only for the classical sandpile in dimension 1 a proof is available. Note that for a quenched sandpile model, if both $l$-sites and $r$-sites exist, $\rho_{th}=0$, since it is always possible to have too many particles between a $r$-site and an $l$-site, from which particles can never escape. Hence, again, only directed models are of interest. We prove the following.

\begin{theorem}
\label{opnieuw}
Suppose that $\nu$ is directed and satisfies (\ref{restriction}).
Then $\rho_{tr}=\rho_{th}$.
\end{theorem}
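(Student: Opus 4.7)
The plan is to prove the two inequalities $\rho_{th}\ge\rho_{tr}$ and $\rho_{th}\le\rho_{tr}$ separately, in each case by coupling the fixed-energy process on $C_n$ with the infinite-volume process on $\Z$ driven by the same Poisson-$\rho$ configuration and $\nu$-environment, restricted to $\{0,1,\dots,n\}$ on the cycle side.

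For $\rho_{th}\ge\rho_{tr}$: suppose $\rho<\rho_{tr}$. By the definition of $\rho_{tr}$ the $\Z$-process is $(\mu_\rho\times\nu)$-a.s.\ stabilizable, so the odometer $u^{\Z}$ (the number of topplings per site) is a.s.\ finite everywhere and a limiting stable state $\eta^{\infty,\Z}$ exists. The idea is to apply $u^{\Z}(x)$ topplings at each site $x\in\{0,\dots,n\}$ as a first pass on $C_n$. After this pass the $C_n$-configuration agrees with $\eta^{\infty,\Z}|_{\{0,\dots,n\}}$ except near the cyclic cut, where an excess mass arises from the flux that escaped on $\Z$ through the edge $(n,n+1)$ but is re-injected at site $0$ on the cycle. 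By subcriticality this excess is a.s.\ finite, and a further finite round of topplings---using the positive density of $s$-sites provided by condition (\ref{restriction}) as an absorber---suffices to produce a legal $C_n$-stabilization. Iterating the correction yields $(\mu_\rho\times\nu)((\eta_n,\zeta_n)\mbox{ is stabilizable})\to 1$.

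For $\rho_{th}\le\rho_{tr}$: suppose $\rho>\rho_{tr}$, so the $\Z$-process is a.s.\ non-stabilizable; by translation invariance and ergodicity this means $u^{\Z}(0)=\infty$ almost surely. I plan to extract a \emph{local} certificate of non-stabilization: a finite event $A_k$ depending only on the initial data in $[-k,k]$ with $A_k\subseteq\{u^{\Z}(0)=\infty\}$ and $\Pr(A_k)\to 1$ as $k\to\infty$, constructed via monotone truncations of the $\Z$-dynamics (for example by placing dissipative sinks at $\pm k$) and passing to the limit in $k$. By stationarity of $\mu_\rho\times\nu$, a translate of $A_k$ occurs inside $\{0,\dots,n\}$ with probability tending to $1$, and since $C_n$ conserves mass, this local obstruction forces non-stabilization on the cycle, giving $(\mu_\rho\times\nu)((\eta_n,\zeta_n)\mbox{ is stabilizable})\to 0$.

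The main obstacle is the cyclic wraparound on $C_n$. In the subcritical direction, the recycled flux across the cut could cascade additional topplings around the entire cycle, and controlling this requires both the subcriticality of $\rho$ and the spacing condition (\ref{restriction}), which guarantees enough $s$-sites to absorb the recycled mass. In the supercritical direction, the harder point is converting the tail event $\{u^{\Z}(0)=\infty\}$ into a truly local event on a bounded window; this goes through a careful monotone coupling with finite-volume truncations and is where the delicacy of the infinite-volume definition of stabilizability really enters.
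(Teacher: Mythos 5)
There is a genuine gap, and it is concentrated in your supercritical direction $\rho_{th}\le\rho_{tr}$. You propose a local event $A_k$, measurable with respect to the initial data in $[-k,k]$, with $\Pr(A_k)>0$ and $A_k\subseteq\{u^{\Z}(0)=\infty\}$. No such event exists: any finite window of initial data can be extended to a configuration on $\Z$ that is stabilizable (for instance by putting height $0$ everywhere outside the window, or, in this model, by surrounding the window with enough holes), so no event depending only on $[-k,k]$ can imply non-stabilizability of the infinite-volume process. The truncation you suggest (sinks at $\pm k$) only yields local events of the form $\{u^{[-k,k]}(0)\ge M\}$ for finite $M$, which lower-bound the odometer but never certify that it is infinite. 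The same objection applies on the cycle: when $n\gg k$, a single window of size $2k+1$ can never force non-stabilization of $C_n$, because the remaining $n-2k$ sites may contain an arbitrarily large reservoir of absorbing sites. The true obstruction on $C_n$ is \emph{global}, not local, and this is exactly what the paper's proof uses: one decomposes each height into its parity plus ``excess pairs,'' observes that excess pairs are conserved under topplings except that each ``hole'' (an $s$-site whose two neighbours both have even parity) can absorb exactly one pair, and then compares the \emph{total} number of excess pairs on the cycle with the \emph{total} number of holes. By the law of large numbers both totals concentrate around $n$ times the densities \eqref{nr2} and \eqref{nr1} respectively, and since mass is conserved on $C_n$, stabilization occurs if and only if (up to $o(n)$ corrections) pairs do not outnumber holes; the crossover is at the same equation as in Theorem \ref{frozrestricted}, whence $\rho_{th}=\rho_{tr}$.

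Your subcritical direction is also less innocent than you present it. Applying $u^{\Z}|_{\{0,\dots,n\}}$ as a first pass on $C_n$ is not a legal (or even well-defined nonnegative) toppling function: on $\Z$, site $0$ receives the influx across the edge $(-1,0)$, which is absent on $C_n$ and replaced by the (different) flux recycled from site $n$, so the first pass can drive heights negative and the least-action principle does not apply as stated; one would have to restabilize an interval with a one-sided sink first and then control the reinjected $O_P(1)$ flux as it propagates around the cycle. This is probably repairable, but the counting argument above delivers both inequalities at once with none of these coupling issues, which is why the paper's proof of this theorem is a one-paragraph reduction to the proof of Theorem \ref{frozrestricted}.
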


\subsection{Parallel chip firing and the activity density}

Consider fixed-energy models with parallel toppling order. These cellular automata are called {\em parallel chip firing} models \cite{prisner}.
We are interested in the {\em activity density} $\rho_a(\mu_{\rho}, \nu, n)$, which loosely formulated, is the eventual average fraction of unstable sites when parallel chip firing is run on a finite graph, $C_n$ in our case. The next step is to take the limit as $n \to \infty$:
\[
\rho_a(\mu_{\rho}, \nu) =  \limsup_{n \to \infty} \rho_a(\mu_{\rho}, \nu, n).
\]
For the classical sandpile on $C_n$, the graph of $\rho_a$ versus $\rho$ has been shown \cite{asta} to take the shape of a `staircase' with three stairs (see Figure \ref{threestairsfig}). More specifically, the classical model on $C_n$ behaves in the following particularly simple way. If the total number of particles is strictly in $(n,2n)$, then every initial configuration will eventually have period 2. Thus, as $t \to \infty$, the total number of unstable sites in one time step will eventually be close to $n/2$. If the total number of particles is larger than $2n$, then eventually all sites will be unstable. Finally, it was already known that if the total number of particles is smaller than $n$, then every initial configuration stabilizes, so that eventually the total number of unstable sites is 0. Thus, the staircase of the classical sandpile model on $C_n$ has three stairs, and it does not even depend on the precise initial distribution of the particles.
Only if the total number of particles is equal to $n$ or $2n$, there are multiple possibilities for the limiting fraction of unstable sites.

\begin{figure}
\includegraphics[width=8cm]{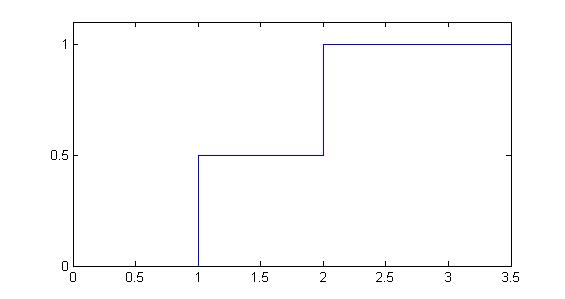}
\caption{The staircase graph for the classical sandpile model on $C_n$. The horizontal axis is $\rho$, the vertical axis is the activity density. The graph is the same for the fixed-energy and the infinite volume model; see \cite{asta} and Theorem
\ref{threestairs}.}
\label{threestairsfig}
\end{figure}

For the flower graph \cite{FLW}, the corresponding plot has five stairs.
In the case of the two-dimensional torus, it is conjectured, based on numerical observations \cite{bagnoli}, that the plot is a devil's staircase. Levine proved the occurrence of a devil's staircase for parallel chip firing on the complete graph, with a special class of initial configurations \cite{levine}.
To show that the plot need not always be a staircase, we include a simulation for the sandpile model of Theorem \ref{frozrestricted} (see Figure \ref{restrictedstaircase}).

\begin{figure}
\includegraphics[width=8cm]{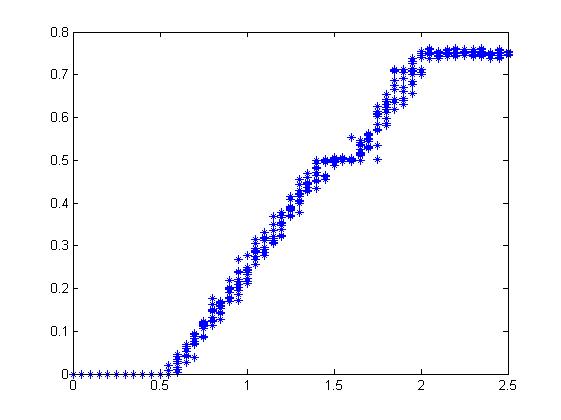}
\caption{Simulation of the the quenched restricted model on $I_{1000}$, with $f_s = 1/4$. There are 10 trials for each of 50 values of $\rho$. In each trial, a new random initial state was generated. Numerically, the predicted value for $\rho_{tr}$
(Theorem \ref{frozrestricted}) in this case is 0.556... }
\label{restrictedstaircase}
\end{figure}

The activity density has been used to simulate $\rho_{tr}$, under the assumption that $\rho_{tr}$ is equal to the so called {\em critical activity density} $\rho_{ca}$ defined as
\[
\rho_{ca} =\rho_{ca}(\nu):= \sup \{\rho: \rho_a(\mu_{\rho},\nu)=0\}.
\]
One can prove this for a number of specific graphs that have an infinite volume counterpart, for instance $C_n$ (where both densities are trivially equal to 1), or, less obvious, for the bracelet graph \cite{FLW}. The flower graph or the complete graph however do not have an infinite volume counterpart.

There is an infinite-volume analog of the activity density. Let $\alpha_{\rho}(n,t)$ be the fraction of active sites in $[-n,n]$ after $t$ iterations of the process, starting with $\mu_{\rho}$. For fixed $t$, the limit $\alpha_{\rho}(t):=\lim_{n \to \infty} \alpha_{\rho}(n,t)$ exists by ergodicity, and is a $\mu_{\rho}$-a.s.\ constant. The following result extends \cite{asta} to infinite volume.

\begin{theorem}
\label{threestairs}
In the infinite volume classical sandpile model on $\Z$ we have the following.
\begin{enumerate}
\item For $\rho <1$, $\lim_{t \to \infty} \alpha_{\rho}(t)=0.$
\item For $\rho >2$, $\lim_{t \to \infty} \alpha_{\rho}(t)=1.$
\item For $1 < \rho < 2$, $\lim_{T \to \infty} \frac{1}{T} \sum_{t=1}^T \alpha_{\rho}(t)= \frac{1}{2}.$
\end{enumerate}
\end{theorem}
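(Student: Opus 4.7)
\medskip\noindent
\textbf{Proof plan.}

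Part (1) is essentially immediate from the identification $\rho_{tr}=1$ for the classical sandpile on $\Z$, recalled in the Introduction from \cite{FMR}. For $\rho<1$ the initial Poisson configuration is $\mu_\rho$-almost-surely stabilizable, so under parallel toppling each site fires only finitely many times and the configuration becomes stable in a (random) finite time. In particular $\ind\{\eta^t(0)\geq 2\}\to 0$ a.s., and bounded convergence gives $\alpha_\rho(t)=\prob(\eta^t(0)\geq 2)\to 0$.

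For Parts (2) and (3) the plan is to couple the $\Z$-dynamics with parallel chip firing on $C_{2m+1}$ and invoke the finite-volume classification of \cite{asta} recalled just before the theorem: on $C_n$ the activity density is eventually $1$ when the chip density exceeds $2$, and the system eventually enters a period-$2$ orbit whose activity is close to $1/2$ per step when the chip density lies in $(1,2)$. The coupling uses finite propagation speed: one step of parallel toppling moves information by at most one site, so $\eta^t_\Z(0)$ is a deterministic function of $\eta^0\restriction[-t,t]$, and wrapping $\eta^0\restriction[-m,m]$ around $C_{2m+1}$ gives a dynamics that agrees with the $\Z$-dynamics at site $0$ for every $t<m$. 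Combined with the law of large numbers, which forces the empirical density of $\eta^0$ on $[-m,m]$ into the correct regime with probability tending to $1$ as $m\to\infty$, this reduces both statements to a quantitative version of the \cite{asta} dichotomy.

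For Part (2), I want to show $\ind\{\eta^t_\Z(0)\geq 2\}\to 1$ a.s., which by bounded convergence yields $\alpha_\rho(t)\to 1$. Writing $T_m(\omega)$ for the (random) transient on $C_{2m+1}$ after which all sites are active, the coupling forces $\eta^t_\Z(0)\geq 2$ whenever $T_m\leq t<m$. To cover every sufficiently large $t$ we need a sequence $m_k\to\infty$ along which $T_{m_k}\leq m_k-1$, i.e.\ a quantitative bound of the form $T_m=o(m)$ almost surely as $m\to\infty$. Obtaining this quantitative control on the transient---either by sharpening the analysis of \cite{asta} or by an independent short estimate showing that, for density strictly above $2$, the attractor is reached within $O(m)$ parallel steps---is the main obstacle; everything else is soft.

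For Part (3) the same coupling is used, but because the limiting finite-volume dynamics has period $2$ one can only expect Cesàro convergence of $\alpha_\rho(t)$. By Fubini,
\begin{equation*}
\frac{1}{T}\sum_{t=1}^{T}\alpha_\rho(t)=\expec\Big[\frac{1}{T}\sum_{t=1}^{T}\ind\{\eta^t_\Z(0)\geq 2\}\Big],
\end{equation*}
and for any $m>T$ the inner sum equals the time average of the activity indicator at site $0$ on $C_{2m+1}$. In the period-$2$ regime on $C_{2m+1}$ the number of active sites per step is close to $(2m+1)/2$, so, by translation invariance of the initial distribution on the cycle, the time average of $\ind\{0\in A_t\}$ tends to $1/2$ after the transient $T_m$. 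The same $T_m=o(m)$ type estimate needed in Part (2) ensures that, after choosing $m=m(T)$ slightly larger than $T$, the contribution of the pre-transient phase is negligible, and bounded convergence then yields the claimed Cesàro limit $1/2$. Thus the technical heart of the proof of both (2) and (3) is exactly one quantitative transient bound for parallel chip firing on $C_n$; the passage from finite to infinite volume is then routine.
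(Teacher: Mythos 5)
Your Part (1) coincides with the paper's argument: for $\rho<1$ the Poisson configuration is a.s.\ stabilizable, so the activity at the origin vanishes eventually and bounded convergence finishes it. For Parts (2) and (3), however, your plan hinges on a single unproven estimate that you yourself flag: an almost-sure bound of the form $T_m\leq \epsilon m$ (some $\epsilon<1$, eventually in $m$) on the transient of parallel chip firing on $C_{2m+1}$ started from Poisson data. This is a genuine gap, not a routine detail. The finite-propagation coupling only lets you read the state of site $0$ on $\Z$ at time $t$ off the cycle $C_{2m+1}$ when $t<m$, so without a transient that is sublinear in $m$ you never see the attractor inside the coupling window. The classification of \cite{asta} that you invoke is purely qualitative (``eventually periodic of period two'', ``eventually all active'') and supplies no rate; nothing in the cited literature gives the required quantitative control, and obtaining it does not look easier than the theorem itself. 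The paper is explicit about exactly this obstruction: ``The main difficulty of the proof is that we work in infinite volume, so we cannot use as a starting point that the dynamics becomes periodic.'' Your route is precisely the one the authors rule out as a starting point. (A minor additional slip: at one point you ask only for an $O(m)$ transient, which would not suffice even in principle, since you need $T_m<m$.)

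The paper instead argues directly in infinite volume. For $\rho>2$ it uses the mirror lemma (Lemma \ref{mirror}, from \cite{bagnoli}): the mirror $\bar\eta(x)=3-\eta(x)$ has mean $3-\rho<1$, hence is a.s.\ stabilizable, so $\bar\eta^t(x)\in\{0,1\}$ eventually and therefore $\eta^t(x)\in\{2,3\}$ eventually --- a two-line argument needing no transient control, which you could adopt for your Part (2). For $1<\rho<2$ the paper shows, by tracing configurations backward in time, that forbidden sub-configurations and the patterns $01$, $10$, $111$ (and their mirrors $32$, $23$, $222$) have spatial density tending to $0$; what remains is locally an alternation of blocks from $\{0,1,11\}$ and $\{2,3,22\}$, which has period exactly $2$, giving $\alpha_{\rho}(t)+\alpha_{\rho}(t+1)\approx 1$ and hence the Ces\`aro limit $1/2$. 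If you wish to salvage your approach, the missing ingredient is a proof that for Poisson initial data of density in $(1,2)$ (respectively above $2$) parallel chip firing on $C_n$ reaches its attractor within $o(n)$ steps with probability tending to $1$; absent that, Parts (2) and (3) are not proved.
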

In words, below $\rho=1$, the fraction of unstable sites converges to 0, above $\rho=2$ it converges to 1, and in between the fraction converges, in a weak sense, to $1/2$. The infinite-volume analog of the critical activity density is therefore equal to 1, and in the classical sandpile in one dimension, all critical densities are equal.

\section{Critical densities for the Manna model}
\label{mannaconjecture}

It is clear how the stationary density should be defined for the Manna model, and it is conjectured that it is strictly less than 1: Sadhu et al \cite{sadhu} claim that for the Manna model, $\rho_s = 0.953\ldots$, based on extrapolation of exact calculations for $n \leq 12$.
Theorem \ref{frozdirstat} perhaps supports the plausibility of the conjecture that $\rho_s < 1$ for the Manna model.

As far as $\rho_{tr}$ is concerned, Dickman et al.\ \cite{dickman} have estimated $\rho_{tr}$ for the Manna model as $0.94887$, in fact estimating the critical activity density on $C_n$. Huynh et al.\ \cite{moremannasims} performed extensive simulations for the Manna model on many one- and two-dimensional lattices. Their estimates confirm the result of \cite{dickman}, and in general give a value close to 0.9 for one-dimensional lattices, and close to 0.7 for two-dimensional lattices (note that the table with particle densities is not included in the first version of this paper).
Sidoravicius and Rolla \cite{rolla} proved that $\rho_{tr} \geq 1/4$. Sadhu et al.\ \cite{sadhu} compare their value for $\rho_s$ of the Manna model to the estimate for $\rho_{tr}$ of \cite{dickman}, implying that they should be equal (See also Conjecture \ref{mainconjecture} below).

It is not hard to show that for the Manna model, $\rho_{th} = 1$. For completeness, we state and prove this now.

\begin{proposition}
For the Manna model on $\Z$, $\rho_{th}=1$.
\label{obvious}
\end{proposition}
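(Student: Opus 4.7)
The key observation is that the cycle $C_n$ has no dissipation: every Manna toppling removes two particles from a site and places two on its neighbors, so the total particle count $N_n:=\sum_{x=0}^{n}\eta_n(x)$ is conserved. Since a stable configuration has heights in $\{0,1\}$, its total is at most $n+1$, and hence any stabilizable $(\eta_n,\zeta_n)$ must satisfy $N_n\le n+1$. Under $\mu_\rho$ one has $N_n\sim\mathrm{Poisson}(\rho(n+1))$, and for $\rho>1$ the law of large numbers gives $\prob(N_n\le n+1)\to 0$. This immediately yields the upper bound $\rho_{th}\le 1$.

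For the lower bound $\rho_{th}\ge 1$ one has to show that, for $\rho<1$, $\prob((\eta_n,\zeta_n)\text{ is stabilizable})\ge 1/2$ for all large $n$. Since $\prob(N_n\le n)\to 1$, it suffices to establish a converse to the observation above: if $N_n\le n$, then the Manna dynamics almost surely stabilizes. The Manna dynamics on $C_n$ is a Markov chain on the finite set of configurations with total $N_n$, and the stable configurations are absorbing, so it suffices to verify the purely combinatorial statement that from every $\eta$ with $\sum\eta\le n$ some finite sequence of $l$-, $s$- and $r$-topplings reaches a stable state. The flexibility of the three toppling types lets one exhibit an explicit Manna odometer of the required form, and a standard finite-Markov-chain argument then upgrades positive-probability reachability to almost-sure absorption.

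Combining the two halves gives $\prob((\eta_n,\zeta_n)\text{ is stabilizable})\to 1$ for $\rho<1$ and $\to 0$ for $\rho>1$, so the defining supremum on $C_n$ lies in every neighbourhood of $1$ for all large $n$, whence $\rho_{th}=1$. The main technical point is the combinatorial construction in the lower bound: the purely symmetric sandpile on a bipartite cycle carries a parity invariant of the form ``$\sum_{x\text{ even}}\eta(x)\bmod 2$ is conserved,'' so one really needs to use that the $l$- and $r$-topplings collectively break this parity as soon as the total is strictly below the filling value $n+1$; the boundary case $N_n=n+1$ has probability $O(n^{-1/2})$ under $\mu_\rho$ and is negligible in the limit.
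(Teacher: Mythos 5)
Your overall architecture is the same as the paper's: the upper bound $\rho_{th}\le 1$ by particle conservation on $C_n$ plus the law of large numbers (the paper treats this as too obvious to write down, but it is correct and needed), and the lower bound by reducing to the claim that a configuration on $C_n$ with at most $n$ particles stabilizes almost surely, via a finite-state absorption argument. The problem is that you leave the one substantive ingredient — that from every configuration with $\sum\eta\le n$ \emph{some} finite toppling sequence reaches a stable state — as an unproved assertion (``the flexibility of the three toppling types lets one exhibit an explicit Manna odometer''), and the justification you sketch for it rests on a false premise. The parity invariant you describe for the purely symmetric dynamics on an even cycle does exist, but it is never an obstruction to termination: for any total $N$ with $1\le N\le n$ there are stable configurations of both parities of $\sum_{x\ \mathrm{even}}\eta(x)$, so nothing needs to be ``broken'' by $l$- or $r$-topplings. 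More to the point, the classical chip-firing termination theorem (Bj\"orner--Lov\'asz--Shor/Tardos: if the number of chips is strictly less than the number of edges, the game is finite; for $C_n$ this means total at most $n$) already guarantees that the all-$s$ strategy terminates, with a number of topplings bounded uniformly over the finitely many configurations of that total. This is exactly the lemma the paper invokes: with probability one the i.i.d.\ label sequence eventually produces a long enough run in which every toppling is symmetric, and then the classical bound forces stabilization. So the correct completion of your argument is simpler than what you envisage — no $l$/$r$ topplings and no bespoke odometer are needed — but as written the lower bound has a genuine hole where that combinatorial lemma should be, and the ``main technical point'' you identify is a misdiagnosis.
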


\begin{proof}
We make use of the following fact for the classical sandpile model on $C_n$: If the total number of particles is less than $n$, then the model will stabilize after finitely many topplings per site, irrespective of the initial position of the particles. This can be deduced in the following manner: if the model stabilizes, then at least one site does not topple. Then the neighbors of that site topple at most once, and so on.

Now for the Manna model on $C_n$, we choose $\rho < 1-\epsilon$, with $\epsilon>0$ arbitrarily small. Then the probability that the total number of particles is more than $n(1- \epsilon/2)$, tends to 0 as $n \to \infty$.
If the total number of particles is less than $n$, then the model stabilizes a.s. To see this, note that one possibility for stabilizing is if there is some time $t$ such that starting from $t$, all unstable sites topple according to an $s$ label until the model stabilizes. Since all labels are i.i.d. and only finitely many $s$ labels are needed, this event occurs a.s. Therefore, the Manna model on $C_n$ stabilizes with probability tending to 1 as $n \to \infty$ for all $\rho<1$, so that $\rho_{th} = 1$.
\end{proof}

We conclude that the critical activity density should be at least 1. However, the simulations for the Manna model arrive at a value strictly less than 1. The discrepancy results from a subtlety in choosing the observation time window. It is observed that there is a density value strictly smaller than 1, such that for $\rho$ below this value, the total number of topplings before stabilization is relatively small, whereas for $\rho$ above this value, the total number of topplings is much larger. It is this value that is taken to be an estimate for $\rho_{tr}$, rather than $\rho_{th}$ as defined in \eqref{rhoth}.
In the second case, before eventually stabilizing, the models is observed to settle into a `quasi-stationary' state, where the fraction of active sites is approximately constant for a long time; see \cite{slowrelax}, or Figure 3 of \cite{dantas} for numerical simulations of this state. It is believed that this quasi-stationary state of the fixed-energy Manna model reflects the true stationary state of the infinite volume Manna model. Therefore, the observation time window is chosen such that the model is observed in the quasi-stationary state, rather than large enough so that the model has stabilized.

Thus the findings of \cite{dickman} and \cite{dhar} indicate the following conjecture for the Manna model, which we now state explicitly. We stress that the conjectured relation between the three densities is markedly different from what is known for other models. For instance, for the classical sandpile model on the bracelet graph it is known \cite{FLW} that $\rho_s \neq \rho_{tr} = \rho_{th}$, and combining Theorems \ref{frozrestricted} and \ref{opnieuw} of this paper gives the same relation for the directed quenched sandpile models satisfying \eqref{restriction}.

\begin{conjecture}
For the Manna model on $\Z$,
\[
\rho_{s} = \rho_{tr} \neq \rho_{th} (=1).
\]
\label{mainconjecture}
\end{conjecture}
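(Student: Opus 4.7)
The assertion splits into three parts: $\rho_{th}=1$ (already covered by Proposition~\ref{obvious}), the equality $\rho_s = \rho_{tr}$, and the strict inequality of this common value with $1$. The crux is the middle claim, which I would attack by making the density-argument heuristic precise as two matching inequalities.

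For $\rho_s \ge \rho_{tr}$, take any $\rho < \rho_{tr}$. By definition the infinite-volume Manna configuration at density $\rho$ stabilises a.s.\ with finitely many topplings per site, and by monotonicity of topplings in the initial configuration, the expected number of boundary topplings caused by a single additional particle in $I_n$ is bounded uniformly in $n$. Hence, in the driven model on $I_n$ started from a configuration stochastically dominated by $\mu_\rho$, the expected dissipation per addition vanishes in the limit. Because mass balance in the stationary regime forces an expected dissipation of $1$ per addition, the stationary density cannot remain below $\rho_{tr}$, giving $\rho_s \ge \rho_{tr}$.

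For $\rho_s \le \rho_{tr}$, I would argue by contradiction. Assume $\rho_s > \rho_{tr}$ and apply an ergodic/Cesaro argument to the driven dynamics in a bulk sub-interval of size $m \ll n$ to conclude that the stationary marginal there is approximately translation-invariant with density close to $\rho_s$. Choose $\rho \in (\rho_{tr}, \rho_s)$ and dominate this bulk marginal stochastically from above by a Poisson configuration at density $\rho$; since the latter is supercritical, the infinite-volume Manna model is unstabilisable and every site topples infinitely often. A local comparison then forces the expected number of topplings per addition in $I_n$ to diverge with $n$, contradicting finiteness of avalanches in finite volume.

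The strict inequality $\rho_{tr}<1$ demands an independent upper bound. One plausible route is to couple the Manna dynamics with a suitable directed quenched variant covered by Theorem~\ref{frozrestricted}(b): averaging the explicit formula there over the randomness in the Manna labels, or dominating the Manna odometer by a quenched one, should yield a bound strictly below $1$. The main obstacle, however, is the upper bound $\rho_s \le \rho_{tr}$, which fails for classical sandpiles precisely because of toppling invariants. Any proof must therefore genuinely exploit the absence of such invariants in the Manna model to justify the translation-invariant approximation of the bulk stationary marginal; this quantitative passage from "no invariants" to "the bulk resembles an infinite-volume measure at density $\rho_s$" is where the substantive new work lies, and explains why the statement is only conjectural.
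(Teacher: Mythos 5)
The statement you are asked about is Conjecture~\ref{mainconjecture}: the paper does \emph{not} prove it, and offers only heuristic support (the density argument, Proposition~\ref{obvious} for $\rho_{th}=1$, and the Ehrenfest-urn example on the complete graph). Your proposal correctly isolates the three sub-claims and correctly notes that $\rho_{th}=1$ is Proposition~\ref{obvious}; you also honestly flag that the passage from ``no toppling invariants'' to a rigorous bulk comparison is the missing ingredient. But the remaining steps are not merely incomplete, several of them rest on arguments that the paper itself shows to be false for closely related models, so they cannot be repaired without input genuinely specific to the annealed Manna dynamics.

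Concretely: (i) your lower bound $\rho_s \ge \rho_{tr}$ is exactly the mass-balance ``density argument'' of Dickman et al., and the whole point of the paper is that this argument fails --- for the quenched models of Theorem~\ref{frozrestricted} one has $\rho_s > \rho_{tr}$ (e.g.\ $0.625$ versus $0.556$ for $f_s=1/4$), while for the two-dimensional ASM the inequality goes the other way; so neither direction of your two matching inequalities can follow from mass balance and monotonicity alone. Moreover, ``a.s.\ finitely many topplings per site in infinite volume'' does not yield a uniform-in-$n$ bound on expected boundary dissipation per addition; that quantitative step is unproved. (ii) In the upper bound you pass from ``the bulk marginal has mean density close to $\rho_s > \rho$'' to stochastic domination of $\mu_\rho$; mean density does not imply stochastic domination, and driven stationary states are typically very far from product measure (in the classical one-dimensional case the stationary state is supported on at most two configurations), so this comparison is a genuine gap, not a technicality. (iii) Your route to $\rho_{tr}<1$ by dominating the Manna odometer by a quenched one collides with the paper's observation that non-directed quenched models have $\rho_{tr}=0$ (particles trapped between an $r$-site and an $l$-site), so there is no monotone coupling in the direction you need; the best rigorous information cited in the paper is $\rho_{tr}\ge 1/4$ from Rolla and Sidoravicius, and $\rho_{tr}<1$ remains open. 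In short, your sketch reproduces the heuristic that motivates the conjecture rather than a proof of it, which is consistent with the paper leaving the statement as a conjecture.
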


With kind permission of Lionel Levine, we give an example to illustrate quasi-stationarity in the Manna model. In this model, there is a value of $\rho$ such that below this value, the total number of topplings scales as $O(n \log n)$, whereas for $\rho$ above this value the total number of topplings increases exponentially in $n$.

\begin{example}
{\em
Consider the fixed-energy Manna model on the complete graph $K_n$, with self-loops. A site is active if there are at least two particles, and in a toppling each of the two particles chooses a new position uniformly at random from its neighbors, which in this case means from all sites.

We can view this sandpile model as an Ehrenfest urn model, with a stopping criterion. We say that the $n$ sites of $K_n$ represent $n$ balls in an urn, of two colors. One color (green) represents a site with an odd number of particles, and one color (red) a site with an even number of particles. In an Ehrenfest urn model, a move consists of picking one ball at random from the urn, removing it, and instead adding a ball of the opposite color.

Now consider a toppling of an active site in the sandpile model. We imagine that we first (temporarily) remove the two particles from the active site, and then one by one add them to their new positions. Removing the particles does not change any of the parities. Putting back one particle changes the parity of one site chosen uniformly at random from the graph, and likewise for the other particle.
Thus, a toppling of an active site is the equivalent of two moves in the urn model.

The sandpile model is stable if and only if each particle is on a different site. Therefore, if the particle number is $m$, then the stopping time for the according urn model is the first time that the number of green balls is $m$ (if $m>n$, then the model never stabilizes).

Now suppose that $n$ is large. Then typically, the number of particles is close to $n\rho$. In the initial configuration, typically the number of sites with an odd number of particles, is close to $\frac n2 (1-e^{-2\rho})$ ($n$ times the probability that a Poisson($\rho$) number is odd), which is strictly less than $n\rho$. In an Ehrenfest urn model, from any starting point the fraction of green balls typically tends to its equilibrium value of $1/2$.

Therefore, if $\rho < 1/2$, then the model meets the stopping criterion while tending to equilibrium. It is known that the expected total number of moves to reach equilibrium for the first time, starting from no green balls, is of the order $n \log n$ \cite{mahmoud}, only slightly more than linearly increasing with $n$. However, when $\rho>1/2$, then the model does not meet the stopping criterion while tending to equilibrium. On the contrary, it meets the stopping criterion only in a large deviation from equilibrium. This will typically take an exponentially large number of moves. Meanwhile, the number of green balls fluctuates round its equilibrium number, which we can describe as quasi-stationary behavior. Therefore, there is a sharp transition in the total number of topplings needed to stabilize, at $\rho=1/2$. If we monitor the quasi-stationary behavior, then we find 1/2 as an estimate for $\rho_{tr}$. Actually, the threshold density is $\rho_{th}=1$.

To find the stationary density, we say that one site is the sink. Then both topplings and additions are equivalent to one or more moves in the Ehrenfest urn model, this time without stopping criterion. Therefore, $\rho_s = 1/2$. We conclude that this model behaves as conjectured for the Manna model on $\Z$.
}
\end{example}

\section{Proofs}
\label{proofs}

\subsection{Proof of Theorem \ref{frozdirstat}}

Our main tool for the proof of Theorem \ref{frozdirstat} the following lemma. Consider the set $A=\{\eta : \eta(x)=1 \mbox{ for all } 0 \leq x <n\}$, which contains two elements which we denote by $\eta_f=(1,1,\ldots, 1)$ and $\eta_e=(1,1,\ldots, 1,0)$.

\begin{lemma}
Consider the driven sandpile model on $I_n$ with toppling environment $\zeta$ given by $\zeta(0)={r}$ and $\zeta(x)={s}$
for all other $x \in I_n$. Then the set $A$ defined above is absorbing. More precisely, denoting the addition site by $a$, transitions between $\eta_f$ and $\eta_e$ are as follows. If $n-a$ is even, then the configuration flips from $\eta_f$ to $\eta_e$ or vice versa; if $n-a$ is odd, then the configuration does not change upon addition at $a$ and subsequent stabilization.
\label{blockboringstuff}
\end{lemma}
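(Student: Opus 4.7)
My plan is to set up the discrete toppling equations for the toppling numbers $u(x)$ induced by the stabilization of $\eta$ after addition at site $a$, and then solve them explicitly, exploiting the rigid structure imposed by the assumption $\eta \in A$. This rigidity reduces the stabilization to a linear recurrence with a single free parameter $c := u(0)$, which I then pin down using the equation at site $n$. The only nonstandard feature is that site $0$ carries an $r$-label, which changes the equation at $x=1$: the inflow into site $1$ coming from site $0$ is $2u(0)$ rather than $u(0)$. By the Abelian property it suffices to exhibit non-negative integer toppling numbers $u(x)$ together with a stable $\eta' \in A$ satisfying the toppling equations; this data is then automatically the true stabilization.

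Concretely, I would write out the toppling equations in three groups: the boundary sites $0, 1$, the interior sites $2 \le x \le n-1$, and the boundary site $n$ (where the only dissipation occurs, since site $0$ carries an $r$-label). Imposing $\eta'(x) = 1$ for $0 \le x < n$ yields
\begin{align*}
u(1) &= 2c - \delta_{a,0}, \\
u(2) &= 2u(1) - 2c - \delta_{a,1}, \\
u(x+1) &= 2u(x) - u(x-1) - \delta_{a,x} \qquad (2 \le x \le n-1).
\end{align*}
A short induction (or a discrete-Green's-function computation) then gives
\[
u(x) = 2c \ \text{ for } 1 \le x \le a, \qquad u(x) = 2c - (x-a) \ \text{ for } a \le x \le n,
\]
with the minor adjustment $u(x) = 2c - x$ for $x \ge 1$ when $a = 0$. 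Plugging into the remaining equation at site $n$ produces, uniformly in $a$,
\[
\eta'(n) \;=\; h + (n-a) + 1 - 2c, \qquad h := \eta(n) \in \{0,1\}.
\]

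The integer $c$ is then pinned down by the requirement $\eta'(n) \in \{0,1\}$: if $h+(n-a)$ is even, then $c = (h+n-a)/2$ and $\eta'(n) = 1$; if $h+(n-a)$ is odd, then $c = (h+n-a+1)/2$ and $\eta'(n) = 0$. In both cases $c \ge 0$, and the minimum value $u(n) = 2c - (n-a)$ equals either $h$ or $h+1$, so all $u(x)$ are non-negative integers. Reading off parities: when $n-a$ is even we obtain $\eta'(n) = 1-h$, so $\eta_e$ and $\eta_f$ swap; when $n-a$ is odd we obtain $\eta'(n) = h$, so the configuration is unchanged. The main obstacle is the anomaly in the recurrence at $x=1$ caused by the $r$-label at site $0$; once that is carefully tracked, the remainder of the argument is routine bookkeeping and an appeal to the Abelian property.
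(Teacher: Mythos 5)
Your computations are correct, and your route is genuinely different from the paper's (which stabilizes in stages, using the explicit single-addition response of the classical driven model and tracking how the unique empty site moves). The gap is in the step where you assert that exhibiting non-negative integer toppling numbers $u(x)$ together with a stable target $\eta'\in A$ satisfying the toppling equations ``is then automatically the true stabilization'' by the Abelian property. The Abelian property says that any sequence of \emph{legal} topplings (each toppled site unstable when toppled) ending in a stable configuration realizes the odometer; what applies to an arbitrary non-negative solution of the toppling equations is only the least action principle, which gives the one-sided bound $u\geq u^*$, where $u^*$ is the true odometer. A non-negative integer solution with a stable target need not be the stabilization: for the classical sandpile on $I_3$, the vector $u=(2,3,3,2)$ maps the already-stable configuration $(1,1,1,1)$ to the stable configuration $(0,0,0,0)$, yet the true odometer is identically $0$ and no legal toppling is possible. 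So as written, your argument shows that stabilization terminates and that $u^*\leq u$; it does not show that the final configuration is your $\eta'$, nor even that it lies in $A$ (the true odometer could in principle be smaller and produce a stable configuration with an empty site in the interior, which would invalidate the recurrences you imposed).

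The gap is fixable with a short supplementary argument, which you should supply. Dissipation occurs only at site $n$ (one particle per toppling of $n$), so comparing total masses of the true stable result $\eta^*$ with your target forces $u^*(n)$ to equal your $u(n)$ up to at most $1$; setting $v:=u-u^*\geq 0$, one has $\Delta v=\eta^*-\eta'$ where $\Delta$ is the toppling matrix of this quenched model, and a direct check of the equations $(\Delta v)(n)=2v(n)-v(n-1)$, the interior harmonicity $2v(x)=v(x-1)+v(x+1)$, and the anomalous equation at site $1$ (where the $r$-label contributes $2v(0)$) shows that every case with $v\neq 0$ forces either a negative or a non-integer value of $v$; hence $v=0$ and $\eta^*=\eta'$. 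With this minimality argument added, your proof is complete and gives the same parity dichotomy as the paper; without it, the key identification of the final configuration is unjustified.
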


Since obviously, $A$ is reachable from any configuration, the lemma implies that there are only two recurrent configurations: the `full configuration' $\eta_f$ in which all sites have height 1, or the `non-full configuration' $\eta_e$ in which all sites have height 1, except site $n$.

\begin{proof}
We use the well known description \cite{redig} which gives the result of one addition and subsequent stabilization in a particle configuration $\eta$ in the driven model on $I_n$: Let $a$ be the addition site. Let $i$ be $\max \{-1, i\leq a : \eta (i)=0\}$, that is, $i$ is the closest empty site to the left of $a$, and if there is no such site, then $i = -1$. Similarly, let $j$ be
$\min \{n+1, j \geq a: \eta (j)=0\}$. The resulting configuration after the addition and subsequent stabilization is denoted $\xi$. Then only sites in $(i,j)$ have toppled, and $\xi$ is as follows: sites $i$ and $j$ have gained 1 particle and site $i+j-a$ has lost 1 particle.
Note that if $\eta(a)=0$, then the above description simply states that site $a$ will become full. If $\eta(a)=1$, then at most three sites change height.

Now we return to the sandpile model with $\zeta(0)={r}$, and let $\eta \in A$. We will show that if $n-a$ is odd, then we end up in $\eta$ after addition and subsequent stabilization. If $n-a$ is even however, then we flip to the other element in $A$.

We first note that if the particle configuration is non-full and $a=n$, then it is obvious that the result is the full configuration.
Now suppose that $a$ is a full site. After the addition, we stabilize in steps. In the first step, we perform only $s$-topplings (that is, we do not topple site 0) until all sites except site 0 are stable. This step is equivalent to stabilization in the driven classical sandpile on $[1,n]$, with sinks at sites 0 and $n+1$.
In the second step, we topple site 0 once, and subsequently again only perform $s$-topplings until all sites except site 0 are stable. We repeat the second step until the whole configuration is stable.

To find the particle configuration after the first step, we apply the above description with $i=0$ and $j = j_1$ where $j_1=n$ if $\eta = \eta_e$, and $j_1=n+1$ if $\eta = \eta_f$. Then after the first step, site 0 now has two particles, and all other sites are full except for site $j_1-a$, which is now empty.

To find the particle configuration after the second step, we apply the above description twice, because toppling site 0 is equivalent to emptying site 0 and making two additions to site 1. Therefore, after the second step site 0 has again two particles, and all other sites are full except site $j_1-a-2$, which is now empty.

We repeat the second step $k$ times, with $k$ such that either $j_1-a-2k = 0$ or $j_1-a-2k = 1$. In the first case, we have reached a stable configuration, namely $\eta_f$. This will be the case if $j_1=n$ and $n-a$ is even, or $j_1=n+1$ and $n-a$ is odd. In the second case, all sites will topple one more time, so that the final configuration is $\eta_e$. This will be the case if $j_1=n+1$ and $n-a$ is even, or $j_1=n$ and $n-a$ is odd. 
This finishes the proof.
\end{proof}

The idea of the proof of Theorem \ref{frozdirstat} in words is as follows: we think of almost the entire state of being built up of `blocks' in which the leftmost site is ${r}$, and all other sites are ${s}$. Each of these blocks is like the model in Lemma \ref{blockboringstuff}: after a transient period, either all sites are full, or only the rightmost site is empty. Moreover, we will find that at stationarity, these configurations are equally likely. Therefore, since the proportion of ${r}$ sites is $f_r$, the proportion of empty sites at stationarity is $f_r/2$.

\begin{proof}[Proof of Theorem \ref{frozdirstat}]
Let $(\eta, \zeta)$ be a state on $I_n$ of the quenched directed model. Let $R_n$ be the number of ${r}$ sites in $\zeta_n$,
and call $X_i$, $i = 1, \ldots, R_n$ the positions of the ${r}$ sites, in increasing order.
We define $B_0=[0, X_1-1]$ and $B_i = [X_i, X_{i+1}-1]$ for $i=1,\ldots, R_n$. We refer to the $B_i$'s as a `block' and denote by $l_i$ be the number of sites in $B_i$.

The block $B_0$ behaves as a classical sandpile, implying that the only recurrent (sub) configurations on $B_0$ are those with at most one site with height 0. For all other blocks, Lemma \ref{blockboringstuff} implies that there are two recurrent sub-configurations, namely all $l_i$ sites are full (a `full' block), or all sites are full except the rightmost site $X_{i+1}-1$ (a `non-full' block).
In fact, according to Lemma \ref{blockboringstuff}, an addition to the $a$-th site (counting from the left) in $B_i$ with $l_i-a$ odd will leave the configuration unchanged, whereas an addition to the $a$-th site in $B_i$ with $l_i-a$ even will flip the block, that is, change its configuration from full to non-full or vice versa.

After a random but a.s.\ finite time, all blocks $B_1, \ldots, B_{R_n}$ are either full or non-full, and in $B_0$ there is at most one empty site. We will now argue that for each block $B_1, \ldots, B_{R_n}$, the probability that it is full converges to $1/2$ as time tends to infinity.

Consider the effect of one addition. Suppose we add at a site in $B_i$ for some $i \geq 1$. As noted before, either $B_i$ flips, or it does not, depending only on the location of the addition and its relative position in $B_i$. In particular, whether or not the block flips does not depend on the configuration. If $B_i$ flips, then either zero or two particles leave $B_i$. They have the effect of an even number of additions to site $X_{i+1}$. Since one addition at $X_{i+1}$ has either no effect, or results in a flip, two additions have the effect of no change at all. Therefore, the configuration of block $B_{i+1}$ does not change, and as a result, two or zero particles leave $B_{i+1}$ into $B_{i+2}$. Continuing this way, we conclude that if $B_i$ flips, then no other block flips.
If $B_i$ does not flip, then one particle leaves $B_i$, and it has the effect of one addition to site $X_{i+1}$. Therefore, $B_{i+1}$ will flip if $l_i$ is odd, otherwise, one particle leaves $B_{i+1}$ and has the effect of one addition to site $X_{i+2}$. We conclude that if $B_i$ does not flip, then `the first odd block downstream' flips. In other words, block $B_j$ flips, where $l_j$ is odd, and there is no $i<k<j$ such that $l_k$ is odd. If there exists no such $j$, then no block flips, so that the addition did not change the configuration.

In the interval $[0,X_1-1]$ there is at most one empty site, but it is not necessarily site $X_1 -1$. If the addition site is in $B_0$, then the new configuration depends on the current configuration. However, at most two particles leave this interval: either one site flips from 0 to 1, one site flips from 1 to 0, or two sites flip (one from 0 to 1 and one vice versa). In the first two cases, zero or two particles leave $B_0$, so that the configuration downstream is not affected. In the third case, one particle leaves $B_0$, so that the first odd block downstream flips.

Thus we can partition all sites, except the sites in $[0,X_1-1]$, into deterministic non-empty sets $S_i$, $i=1,\ldots, R_n$, such that if there is an addition to a site in $S_i$, then $B_i$ and only $B_i$ flips. We can summarize the effect of an addition as follows: pick a random block, such that for all $i=1,\ldots, R_n$, block $B_i$ is chosen with probability $\frac{|S_i|}{n+1}$, and flip it. With probability $\frac{X_1}{n+1}$, flip either a site in $[0,X_1-1]$, or two in $[0,X_1-1]$ and the first odd block. With probability $\frac{(n+1 - X_1 - \sum_i |S_i|)}{n+1}$, flip nothing.

It is clear from the above description that for each block $B_1, B_2, \ldots, B_{R_n}$, the probability that it has flipped an even number of times until time $t$ (inclusive, say) converges to $1/2$ as $t \to \infty$. Hence, the expected number of sites with height 0 in block $B_i$, converges to $1/2$ as $t \to \infty$, for all $i=1, \ldots, R_n$. Block $B_0$ can be viewed as a classical sandpile model on $X_1$ sites (with a slower addition rate of course). Therefore, the expected number of sites in $B_0$ with height 0 converges to $1/(x+1)$ as $t \to \infty$.

Writing $\prob$ for $\mu \times \nu$ and $\expec$ for the corresponding expectation, we now write
\begin{eqnarray*}
\rho_n^t &=& \frac{1}{n+1}\sum_{k,x}\expec \left(\sum_{i=0}^n \eta^t(i)|  R_n=k,X_1 = x\right)\prob(R_n=k,X_1=x),
\end{eqnarray*}
which is a finite sum.
When $t \to \infty$, this converges to
\begin{eqnarray*}
\lim_{t \to \infty} \rho_n^t & = & \sum_{k,x}\left( 1-\frac{k}{2(n+1)} -\frac{1}{(x+1)(n+1)}\right) \prob(R_n=k, X_1=x)\\
&=& 1-\frac{1}{2(n+1)}\expec(R_n)-\sum_{k,x} \frac{1}{(x+1)(n+1)} \prob(R_n=k, X_1=x).
\end{eqnarray*}
Since $\frac 1{n+1} \expec (R_n)$ tends to $\nu(\zeta: \zeta(0)=r) = f_r$ as $n \to \infty$ and the last term tends to 0, we obtain
\begin{eqnarray*}
\lim_{n \to \infty} \lim_{t \to \infty} \rho_n^t &=& 1 - f_r/2.
\end{eqnarray*}
\end{proof}

\subsection{Proof of Theorem \ref{frozrestricted} and Theorem \ref{opnieuw}}

\noindent
{\em Proof of Theorem \ref{frozrestricted}.} Part (a) follows from Theorem \ref{frozdirstat}, hence we need only to prove (b). Recall that the initial configuration $\eta$ consists of a Poisson-$\rho$ number of particles at every site. It will be convenient to also consider the parity of the particle numbers at each site. To this end, let $\eta' = \eta \mod 2$,
and call $\eta'(x)$  the {\em parity} of $x$.
We define $E(x) = (\eta(x)-\eta'(x))/2$, and call $E(x)$ the number of {\em excess pairs of particles} at $x$. The $E(x)$ will play a crucial role in what follows. When a site $x$ is unstable, it topples, which means that it loses two particles. If $\zeta(x)=r$, the two particles can be thought of as staying together, while when $\zeta(x)=s$, the two particles are separated, since they are sent to different neighbors. It is for this reason that we need to analyze in detail how topplings of an $s$-site take place, since only around such $s$-sites it is possible that the excess pairs can be absorbed.

Consider a site $x$ with $\zeta(x)=s$. For the time being, we restrict our attention to $\{x-1, x, x+1\}$. We topple unstable sites in $\{x-1, x, x+1\}$, until we reach the final stable configuration $\xi$ on $\{x-1, x, x+1\}$. Since we work in the restricted model, we have that $\zeta(x-1)=\zeta(x+1)=r$. First, observe that topplings will never change the parity of $x$.

Furthermore, the sum of the parities of $x-1$ and $x+1$, computed $\mod 2$, does not change either under topplings.
Finally, note that if $x$ topples at all, then $\xi(x-1)$ can not be 0, and hence $\xi(x-1)=1$. Hence, we have the following table, dealing with the parities before versus parity/configuration after stabilization of $\{x-1, x, x+1\}$, under the assumption that $x$ topples at least once.
\begin{center}
\begin{tabular}{|c|c|}
\hline
 parity before stabilization & configuration after stabilization \\
 \hline
 000 & 101 \\
 \hline
 001 & 100 \\
\hline
 010 & 111 \\
 \hline
 100 & 100 \\
 \hline
 011 & 110 \\
 \hline
 101 & 101 \\
 \hline
 110 & 110 \\
 \hline
 111 & 111 \\
 \hline
\end{tabular}
\end{center}
It follows from the table that the set $\{x-1, x, x+1\}$ can absorb an excess pair if and only if the original parities are 000 or 010. We call a site $x$ with $\zeta(x)=s$ a {\em hole}, if $\eta'(x-1)=\eta'(x+1)=0$. In all other cases, all excess pairs that are present in, or enter $\{x-1, x, x+1\}$ from the left, will after some topplings in $\{x-1, x, x+1\}$ result in excess pairs leaving at the right of the set.
For every excess pair, this requires at most two topplings per site.
This observation suggests that in order to decide whether or not a configuration is stabilizable, we need to compare the spatial density of excess pairs to the density of holes. Or, in other words, we need to compare the probability that a given site $x$ is a hole with the expected number of excess pairs at $x$. Below, we formally prove that this is the correct picture, but first we compute these quantities.

It is easy to compute the (conditional) probability for a site $x$ to be a hole, given that $\zeta(x)=s$. Indeed, since the original configuration is product measure with Poisson-$\rho$ marginals, the probability that a site has parity 0 is
$\frac12(1 + e^{-2\rho})$. It follows that the probability of a site $x$ being a hole is
\begin{equation}
\label{nr1}
\frac{1-f_r}{4}(1+e^{-2\rho})^2.
\end{equation}
A similar and related computation shows that the expected number of excess pairs is equal to
\begin{equation}
\frac12(\rho-\frac12(1+e^{-2\rho})).
\label{nr2}
\end{equation}
Comparison of (\ref{nr1}) and (\ref{nr2}) then finishes the proof.

It remains to show that it is indeed the case that the relation between the probability of a hole and the expected number of excess pairs determines whether or not stabilization occurs. To this end, we first assume that (\ref{nr1}) $<$ (\ref{nr2}), that is, the spatial density of holes is smaller than the expected number of excess pairs per site. Suppose that stabilization occurs under this assumption.
First we note that if the configuration stabilizes, all excess pairs must be absorbed by a hole. Indeed, if there is an excess pair that is never absorbed, with positive probability, then by ergodicity there must be infinitely many such pairs to the left of the origin, which implies that the origin will topple infinitely many times. However, a pairing in which every excess pair is paired to a unique hole is only possible if the spatial density of the holes is at least as large as the spatial density of the excess pairs; this follows, for instance, from the well known mass transport principle, see e.g. \cite{schramm}. This violates the assumption, and we are done.

For the other direction, we assume that (\ref{nr1}) $>$ (\ref{nr2}). Define a random walk $S(x), x \in \Z$ as follows. First we put $S_0=0$.
Writing $h(x)=1$ if there is a hole at $x$, and $h(x)=0$ otherwise, we put
$$
S(x)= S(x-1) + E(x)-h(x).
$$
By assumption, $S$ has negative drift. We say that $x$ is {\em extremal} if $S(y)> S(x)$ for all $y <x$. The collection of extremal points has a well defined spatial density. In particular, there are infinitely many extremal points in both directions. It is now simple to see that any excess pair will be absorbed before reaching the first extremal point to the right. In particular, no excess pair will travel across an extremal point. From this it follows that the configuration stabilizes.
\qed

\medskip\noindent
{\em Proof of Theorem \ref{opnieuw}.}
The proof of this result is very similar to the proof of Theorem \ref{frozrestricted}. Again, we have to compare the number of excess pairs to the number of holes. For the same reason as before, each hole can accommodate one excess pair. What follows from this is that $\rho_{th}$ has the same numerical value as $\rho_{tr}$.
\qed

\subsection{Proof of Theorem \ref{threestairs}}

The main difficulty of the proof is that we work in infinite volume, so we cannot use as a starting point that the dynamics becomes periodic.

The cases $\rho <1$ and $\rho >2$ are not difficult.
The case $1 < \rho <2$ however is rather tedious.
Our strategy is to generalize the well-known notion of a {\em forbidden sub-configuration} (FSC) \cite{dhar, MRZ}, special local configurations which do not appear in the stationary state of the system. We will show that in infinite volume with parallel toppling order, there are a number of additional local configurations which, essentially, do not appear in the stationary state of the infinite-volume system in the sense that their spatial density tends to 0 as $t \to \infty$. Once we have proved that, it is easy to show that in all remaining possibilities, all sites have period 2, and this will conclude the proof.

In the proof, the following lemma plays an important role. In \cite{bagnoli} it was stated and proved for $\Z^2$, but their proof works for general graphs. We state a version suitable for our purposes. We say that two configurations $\eta$ and $\xi$ are {\em mirror images} of each other, if $\eta(x) = 3 -\xi(x)$ for all $x$.

\begin{lemma}[\cite{bagnoli}]
Let two height configurations $\eta$ and $\xi$ be mirror images of each other. Then after performing, for each, a parallel toppling time step, the two resulting configurations are again mirror images of each other.
\label{mirror}
\end{lemma}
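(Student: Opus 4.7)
The plan is to verify the identity $\eta'(x) = 3 - \xi'(x)$ pointwise by a direct arithmetic calculation. First I would write the parallel update rule as a single closed-form expression: at each site $x \in \Z$,
\[
\eta'(x) = \eta(x) - 2\cdot\mathbf{1}[\eta(x) \geq 2] + \mathbf{1}[\eta(x-1) \geq 2] + \mathbf{1}[\eta(x+1) \geq 2],
\]
and analogously for $\xi'$. This formula simply encodes that any unstable site loses two particles in one parallel step, while every site gains one particle per unstable neighbor.

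The key observation is that the mirror relation swaps stable and unstable sites at every $y$: since $\xi(y) = 3 - \eta(y)$ with heights in $\{0,1,2,3\}$,
\[
\mathbf{1}[\xi(y) \geq 2] = \mathbf{1}[\eta(y) \leq 1] = 1 - \mathbf{1}[\eta(y) \geq 2].
\]
In other words, at any given site exactly one of $\eta$ and $\xi$ topples in the parallel step.

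Substituting this identity into the two update rules and adding them, I would obtain
\[
\eta'(x) + \xi'(x) = \bigl(\eta(x)+\xi(x)\bigr) - 2 + \sum_{y \sim x}\bigl(\mathbf{1}[\eta(y)\ge 2] + \mathbf{1}[\xi(y)\ge 2]\bigr) = 3 - 2 + 2 = 3,
\]
which is exactly the mirror relation for the updated configurations.

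There is no substantial obstacle here; the only point worth emphasizing is that the constant $3$ is forced by the arithmetic $3 = 1 + \deg(x)$ on $\Z$, where $\deg(x)=2$ and the toppling threshold is $2$. The same argument yields the analogous mirror identity on any regular graph, with $3$ replaced by $1 + \deg(x)$, which is consistent with the original statement in \cite{bagnoli} on $\Z^2$.
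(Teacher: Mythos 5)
Your calculation is correct: with heights necessarily in $\{0,1,2,3\}$ (forced by the mirror relation and nonnegativity), the identity $\mathbf{1}[\xi(y)\ge 2]=1-\mathbf{1}[\eta(y)\ge 2]$ turns the sum of the two update rules into $\eta'(x)+\xi'(x)=3-2+2=3$, and since both updated heights are automatically nonnegative the mirror relation persists. The paper itself does not prove this lemma but defers to \cite{bagnoli}; your argument is the standard one and is a perfectly adequate self-contained verification. One small correction to your closing remark: on a $d$-regular graph with threshold $d$ the mirror constant is forced by $\mathbf{1}[c-\eta\ge d]=\mathbf{1}[\eta\le d-1]$ to be $c=2d-1$, not $1+\deg(x)$; the two expressions happen to coincide on $\Z$ (both give $3$), but on $\Z^2$ the correct constant is $7$, not $5$. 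This does not affect the proof of the lemma as stated.
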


\medskip\noindent
{\em Proof of Theorem \ref{threestairs}.}
The case $\rho < 1$ follows from the fact that the system is (a.s.) stabilizable when $\rho<1$. This is a well-known fact, proved in several papers, see for example \cite{quant}. We remark that in \cite{MRZ}, it is moreover proved that for $\rho=1$ the system is not stabilizable.

Next, consider the case $\rho>2$. We have that $\expec_\rho (3-\eta(0)) < 1$. Therefore, when $\eta$ is chosen according to $\mu_{\rho}$, the mirror image $\bar{\eta}$ of $\eta$ is a.s.\ stabilizable. So when we start the dynamics from both $\eta$ and $\bar{\eta}$, then for every site $x$, $\bar{\eta}^t(x) \in \{0,1\}$ eventually. According to Lemma \ref{mirror}, $\eta^t(x) \in \{2,3\}$ eventually, proving this case. (Note that when a site $x$, and its two neighbours, are all unstable, then after toppling all unstable sites, the height of $x$ is the same as before. So it is possible to have a well defined limiting configuration, despite the fact that topplings continue to take place.)

Suppose now that $1<\rho<2$. In this case, when we choose $\eta$ according to $\mu_{\rho}$, neither $\eta$ nor $\bar{\eta}$ is stabilizable a.s. Therefore, again using Lemma \ref{mirror}, for every site $x$, there is a (random) time $T< \infty$ such that if $t>T$, $x$ has been stable for at least one time step, and also has been unstable for at least one time step. Hence, for any finite subset $W$ of $\Z$, there is a (random) time
$T_W< \infty$ such that before time $T_W$, every site in $W$ has been both stabel and unstable for at least one time step. It is well known (see \cite{dhar} and also \cite{MRZ}) that this implies that for $t > T_W$, the total number of particles in $W$ is at least equal to the number of internal edges in $W$. A local configuration in $W$ which does {\em not} satisfy this property is called a {\em forbidden sub-configuration} (FSC). It also follows from \cite{dhar} that both $\eta_t(x)$ and $\bar{\eta}_t(x)$ are at most 3, for all $x \in W$ and $t > T_W$.

Comparing the dynamics from both $\eta$ and its mirror $\bar{\eta}$, we have from Lemma \ref{mirror} that if $\eta^t(x)$ is unstable, then $\bar{\eta}^t(x)$ is stable, and vice versa. Since $\bar{\eta}$ eventually has no FSC in $W$,
we see that in $\eta$, also mirrors of FSC's no longer appear eventually. For instance, since
$00$ is a FSC, and therefore does not appear anymore eventually, neither does e.g.\ $33$ occur in the long run, on any given position, and similarly for all other FSC's.

Apart from the absence of FSC's, we claim that also the patterns $01, 10$ and $111$ are very rare, in the sense that the spatial density of these finite-dimensional patterns converges to 0 as $t$ tends to infinity. Using Lemma \ref{mirror} or the above this implies that also $23, 32$ and $222$ have small spatial density for $t$ large.
We prove this claim at the end of this proof, but first show that it implies the desired result.

It is easy to check from the absence of the above mentioned local configurations, that apart from local perturbations which occur with spatial density tending to 0, the only possible configurations are those in which we alternatingly choose elements from $\{0,1,11\}$ and $\{2,3,22\}$. Any finite-dimensional configuration which is built up from alternatingly choosing elements from $\{0,1,11\}$ and $\{2,3,22\}$ has period 2. If the full configuration would be like that (that is, if the FSC's, $01, 10, 111, 222, 23$ and $32$ would not occur at all) than it would follow that
\begin{equation}
\label{somis1}
\alpha_{\rho}(t) + \alpha_{\rho}(t+1)=1,
\end{equation}
for all $t$. Indeed, if that were the case, then the configuration would be periodic with period 2, with each site alternating between a stable and an unstable state. However, we have to deal with the rare perturbations of the FSC's and the configurations $01, 10, 111, 222, 23$ and $32$. This however is not a real complication. For any $\epsilon >0$, we find $T$ so large that the sum of the spatial densities of all the problematic sub-configurations is at most $\epsilon$, for all $t >T$. It is clear that for these values of $t$, the sum in (\ref{somis1}) is close to 1. This is enough to prove the result.

It remains to prove the claim that the patterns 01, 10 and 111 have spatial densities which converges to 0 as $t \to \infty$. We start with 01 and 10. We write $\mu_{\rho}^t$ for the probability measure describing the state of the process at time $t$, hence $\mu_{\rho}^0=\mu_{\rho}$. Since the initial measure $\mu_{\rho}^0$ is ergodic, and $\mu^t_{\rho}$ is a factor of $\mu^0_{\rho}$, it follows that also $\mu^t_{\rho}$ is ergodic. Hence both 01 and 10 have a well-defined spatial density at time $t$, which is an a.s.\ constant, depending on $t$. By symmetry, these two densities must be the same. (We do not need that they are the same, but we do need that if one is positive, so is the other.)

Now let, as before, $W$ be a finite interval, and suppose that for some $t$, the configuration in $W$ contains the patterns 01 and 10 (in that order). When $\eta^t(x)=0$ and $\eta^t(x+1)=1$, we will say that the pattern 01 occurs {\em from} $x$ (the leftmost vertex of the pattern) and similarly for other local patterns.

Suppose 01 occurs from $x$ at a certain time $s$, where $s$ is much larger than $T_W$, how large exactly will be decided later. We are interested in the last time before time $s$ that the configuration at $(x, x+1)$ was different from 01; we call this the {\em previous} configuration. So the previous configuration is not necessarily the configuration at time $s-1$. Instead, it is the last configuration at $(x, x+1)$ which was different from the current one. The 0 at $x$ can only have been obtained through a toppling at $x$, and since $x+1$ has only one particle, site $x+2$ must have been stable. Hence the previous configuration from $x$ must have been either 200 or 201. Since 00 is an FSC, assuming that the previous configuration is at a time still larger than $T_W$, 00 is not allowed, so 201 is the only option. This means that the pattern 01 cannot be created after time $T_W$: if it occurs from $x$, then the previous configuration from $x$ must have been 201, implying that 01 occurs from $x+1$. A similar (symmetric) argument is valid for the occurrence of 10.

It follows that when we go backwards in time, as long as we remain larger than $T_W$, the pattern 01 `travels' to the right, and the pattern 10 `travels' to the left. This implies that at some time in the past they must have met, and we must have either the pattern 0110 or the pattern $01*10$ in $W$, where $*$ is arbitrary. All this, again, is only valid if we assume that when we go backwards in time, we never get to a time smaller than $T$ before one of the two patterns 0110 or $01*10$ occurs. It is clear that we can take $s$ so large that this is the case with overwhelming probability, since the model does not stabilize.

But now observe that 0110 is an FSC, and therefore this is not a possibility, since FSC's do not occur in $W$ after time $T_W$. Furthermore, the previous configuration of $01*10$ is an FSC whenever the $*$ is 0 or 1, and therefore these are not possible. If $*=2$, then the 2 must have arisen from the previous configuration through a toppling of (at least) one of its neighbours, but then the 0 next to the toppled site is impossible. Similarly for the possibility that $*=3$. We conclude that all cases lead to impossibilities, and therefore, on any finite interval $W$, we cannot have the patterns 01 and 10 together, after sufficiently long time. This implies that the (common) density of 01 and 10 must tend to zero, as time tends to infinity.

Next, we focus on the pattern 111, and again we ask the question what the previous configuration could have been. (Recall that the previous configuration is not the configuration one time unit earlier, but the last configuration on that position that was different from 111.) If the previous configuration contains a 0, then exactly one of the neighbours of this 0 must be unstable. Indeed, after the toppling of all unstable sites, this is the only way to go from a 0 to a 1. But this implies that the other neighbour of the 0 is a 0 or a 1, which means that the previous configuration contains either an FSC, or the configuration 01 (or 10), which we know by the previous paragraph has density tending to 0.

If the previous configuration contains a 1, then the neighbours of this 1 can not be unstable, and by the same argument as above, the neighbours of this 1 cannot be 0 or 1 (after time $T_W$). Hence, the only possibility is that the previous configuration would be 111, but this is a contradiction, since by definition, the previous configuration must be different from the current one. It remains to rule out all previous configurations which consist of only 2's and 3's. It is obvious however that it is not possible to reach 111 through such a previous configuration: if all sites are unstable then the central site receives two particles and is again unstable.

We summarize that if the pattern 111 occurs from $x$ at a given time larger than $T_W$, then either it was present from $x$ at time $T_W$ or one of the patterns 01 or 10 was present in the previous configuration. The first possibility requires that the central site has not toppled since time $T_W$, which has probability tending to $0$ as time tends to infinity. The second possibility gives us that the spatial density of the pattern 111 tends to 0, just as we showed for the patterns 01 and 10.
\qed

\paragraph{Acknowledgment:} We thank Lionel Levine, David Wilson and Alexander Holroyd for stimulating discussions.


\begin{thebibliography}{99}
\bibitem{absorbing1}
R. Dickman and A. Vespignani and S. Zapperi:
Self-organized criticality as an absorbing-state phase transition,
{\em Phys.\ Rev.\ E} {\bf 57}, 5095--5105 (1998).

\bibitem{absorbing2}
A. Vespignani, R. Dickman, M. A. Mu{\~n}oz and S. Zapperi:
Driving, conservation and absorbing states in sandpiles,
{\em Phys.\ Rev.\ Lett.} {\bf 81}, 5676--5679 (1998).

\bibitem{absorbing3}
R. Dickman, M. Mu{\~n}oz, A. Vespignani and S. Zapperi:
Paths to self-organized criticality,
{\em Brazilian J. Phys.} {\bf 30}, 27--42 (2000).

\bibitem{absorbing4}
A. Vespignani, R. Dickman, M. A. Mu{\~n}oz and S. Zapperi:
Absorbing-state phase transitions in fixed-energy sandpiles,
{\em Phys.\ Rev.\ E} {\bf 62}, 4564--4582 (2000).

\bibitem{absorbing5}
M. A. Mu{\~n}oz, R. Dickman, R. Pastor-Satorras, A. Vespignani and S. Zapperi:
Sandpiles and absorbing-state phase transitions: recent results and open problems,
Proc.\ 6th Granada Seminar on Computational Physics,
editor: J. Marro and P. L. Garrido,
publisher: American Institute of Physics (2001).

\bibitem{asta}
L. Dall'Asta:
Exact solution of the one-dimensional deterministic fixed-energy sandpile,
{\it Phys. Rev. Lett.} {\bf 96} (5), 58003 (2010).

\bibitem{bitar}
J. Bitar and E. Goles:
Parallel chip firing games on graphs,
{\it Theoretical Comp. Sci.} {\bf 92}, no. 2, 291--300 (1992).

\bibitem{bagnoli}
F. Bagnoli, F. Cecconi, A. Flammini and A. Vespignani:
Short period attractors and non-ergodic behavior in the deterministic fixed-energy sandpile model,
{\it Europhysics letters} {\bf 63}, 512--518 (2003).

\bibitem{BTW}
P. Bak, C. Tang and K. Wiesenfeld:
Self-organized criticality: an explanation of the $1/f$ noise,
{\em Phys.\ Rev.\ Lett.} {\bf 59}(4), 381--384 (1987).

\bibitem{schramm}
I. Benjamini, R. Lyons, Y. Peres and O. Schramm:
Group-invariant percolation on graphs,
{\em J. Amer. Math. Soc.} {\bf 14}, 487--507 (1999).

\bibitem{casartelli}
M. Casartelli, L. Dall'Asta, A. Vezzani and P. Vivo:
Dynamical invariants in the deterministic fixed-energy sandpile,
{\it European Phys. J. B} {\bf 52} no. 1, 91--105 (2006).

\bibitem{chessa}
A. Chessa, E. Marinari and A. Vespignani:
Energy Constrained Sandpile Models,
{\em Phys. Rev. Lett.} {\bf 80}, 4217 (1998).

\bibitem{dantas}
W.G. Dantas and J.F. Stilck:
Generalized Manna sandpile with height restrictions,
{\em Braz. Journ. of Phys.}, {\bf 36}, 750--754 (2006).

\bibitem{dhar}
D. Dhar:
The abelian sandpile and related models,
{\em Physica A} {\bf 263}, 4--25, (1999).

\bibitem{directed}
D.Dhar and R. Ramaswamy:
Exactly solved model of self-organized critical phenomena,
{\em Phys. Rev. Lett.} {\bf 63}, 1659--1662 (1989).

\bibitem{slowrelax}
R. Dickman:
Generic slow relaxation in a stochastic sandpile,
{\em Europhys. Lett.} {\bf 61}, 294 (2003).

\bibitem{dickman}
R. Dickman, M. Alava, M. Munoz, J. Peltola, A. Vespignani and S. Zapperi:
Critical behaviour of a one-dimensional fixed-energy stochastic sandpile,
{\em Phys. Rev. E} {\bf 64}, 056104 (2001).

\bibitem{FLW}
Anne Fey, Lionel Levine and David Wilson:
The approach to criticality in sandpiles,
{\em  Phys. Rev. E} {\bf 82}, 031121 (2010).

\bibitem{FMR}
{A. Fey-den Boer, R. Meester and F. Redig:}
Stabilizability and percolation in the infinite volume sandpile model,
{\em Annals of Probability} {\bf 37} (2), 654--675 (2009).

\bibitem{moremannasims}
Hoai Nguyen Huynh, Gunnar Pruessner and Lock Yue Chew:
Abelian Manna model on various lattices in one and two dimensions,
{\em J. Stat. Mech.} P09024 (2011).

\bibitem{levine}
L. Levine:
Parallel Chip-Firing on the Complete Graph: Devil's Staircase and Poincare Rotation Number,
{\em Ergodic Theory and Dynamical Systems} {\bf 31}, 891--910 (2011).

\bibitem{mahmoud}
H.M. Mahmoud:
Polya urn models (Chapter 3.5, The Ehrenfest Urn),
CRC Press, New York.

\bibitem{manna}
S.S. Manna:
Two-state model of self-organized criticality,
{\it J. Phys. A: Math. Gen.} {\bf 24}, L363--L369 (1991).

\bibitem{quant}
R. Meester and C. Quant:
Connections between 'self-organised' and 'classical' criticality,
{\it Markov Process. Related Fields} {\bf 11}(2), 355-–370 (2005).

\bibitem{MRZ}
R. Meester, F. Redig and D. Znamenski:
The Abelian sandpile; a mathematical introduction,
{\em Markov Processes and Related Fields} {\bf 7} (4), 509--523 (2001).

\bibitem{quasistat}
M.M. de Oliveira and R. Dickman:
How to simulate the quasi-stationary state,
{\em Phys. Rev. E} {\bf 71}, 016129 (2005).

\bibitem{poghosyan}
S.S. Poghosyan, V.S. Poghosyan, V.B. Priezzhev and P. Ruelle:
Numerical Study of the Correspondence Between the Dissipative and Fixed Energy Abelian Sandpile Models,
arXiv:1104.3548 (2011).

\bibitem{prisner}
E. Prisner:
Parallel chip firing on digraphs,
{\it Complex Systems} {\bf 8} (5), 367--383 (1994).

\bibitem{redig}
F. Redig, C. Maes, E. Saada and A. Van Moffaert:
On the thermodynamic limit for a one-dimensional sandpile process,
{\it Markov Proc. Rel. Fields} {\bf 6}, 1--21 (2000).

\bibitem{rolla}
L. Rolla and V. Sidoravicius:
Absorbing-State Phase Transition for Driven-Dissipative Stochastic Dynamics on $Z$,
arXiv:0908.1152v4 (2011).

\bibitem{sadhu}
T. Sadhu and D.Dhar:
Steady state of stochastic sandpile model,
{\em J. Stat. Phys.} {\bf 134}, 427--441 (2009).

\bibitem{vladas}
R. Dickman, L.T. Rolla, V. Sidoravicius:
Activated Random Walkers: Facts, Conjectures and Challenges,
{\em Journal of Statistical Physics}, {\bf 138}, 126--142 (2010).

\end{thebibliography}
\end{document}